\crefname{equation}{}{}
\newcounter{tempEquationCounter} 
\newcounter{thisEquationNumber}
\newtheorem{lemma}{Lemma}[section]
\newtheorem{proposition}{Proposition}[section]
\theoremstyle{definition}
\begin{document}
%
\title{Fronthaul-Limited Uplink OFDMA in Ultra-Dense CRAN with Hybrid Decoding}
%
%
%

\author{Reuben~George~Stephen,~\IEEEmembership{Student Member,~IEEE}, and~Rui~Zhang,~\IEEEmembership{Fellow,~IEEE}
\thanks{Copyright \copyright 2017 IEEE. Personal use of this material is permitted. However, permission to use this material for any other purposes must be obtained from the IEEE by sending a request to pubs-permissions@ieee.org.}
\thanks{This work was supported in part by the National University of Singapore under Research Grant R-263-000-B46-112.}
\thanks{This paper was presented in part at the IEEE Global Communications Conference~(GLOBECOM), Washington D.C., USA, Dec.\ 4--8, 2016.}
\thanks{R. G. Stephen is with the National University of Singapore Graduate School for Integrative Sciences and Engineering, National University of Singapore, Singapore 117456, and also with the Department of Electrical and Computer Engineering, National University of Singapore, Singapore 117583 (e-mail: reubenstephen@u.nus.edu).}
\thanks{R. Zhang is with the Department of Electrical and Computer Engineering, National University of Singapore, Singapore 117583 
(e-mail: elezhang@nus.edu.sg).}
}
%
%
 \markboth{IEEE Transactions on Vehicular Technology}%
 {Accepted for Publication}
%
\maketitle
\begin{abstract}
In an ultra-dense cloud radio access network~(UD-CRAN), a large number of remote radio heads~(RRHs), typically employed as simple relay nodes, are distributed in a target area, which could even outnumber their served users. However, one major challenge is that the large amount of information required to be transferred between each RRH and the central processor~(CP) for joint signal processing can easily exceed the capacity of the fronthaul links connecting them. This motivates our study in this paper on a new \emph{hybrid} decoding scheme where in addition to quantizing and forwarding the received signals for joint decoding at the CP, i.e. \emph{forward-and-decode~(FaD)} as in the conventional CRAN, the RRHs can locally \emph{decode-and-forward~(DaF)} the user messages to save the fronthaul capacity. In particular, we consider the uplink transmission in an orthogonal frequency division multiple access~(OFDMA)-based UD-CRAN, where the proposed hybrid decoding is performed on each OFDMA sub-channel~(SC). We study a joint optimization of the processing mode selections~(DaF or FaD), user-SC assignments 
and the users' transmit power allocations over all SCs to maximize their weighted-sum-rate subject to the RRHs' individual fronthaul capacity constraints and the users' individual power constraints. Although the problem is non-convex, we propose a Lagrange duality based solution, which can be efficiently computed with good accuracy. Further, we propose a low-complexity greedy algorithm which is shown to achieve close to the optimal performance under practical setups. Simulation results show the promising throughput gains of the proposed designs with hybrid decoding, compared to the existing schemes that perform either DaF or FaD at all SCs. 
\end{abstract}
\begin{IEEEkeywords}
Cloud radio access network, hybrid decoding, orthogonal frequency division multiple access, resource allocation, ultra-dense network.
\end{IEEEkeywords}
\section{Introduction}
Increasing the number of cellular base stations~(BSs) to serve a given area, also known as network densification, is foreseen to be a necessary solution 
to address the large data rate 
demands of the future fifth-generation~(5G) wireless communication networks~\cite{andrews-etal2014what,bhushan-etal2014network}. Cloud radio access network~(CRAN) provides a cost-effective way to achieve network densification, by replacing the conventional BSs with low-power 
distributed remote radio heads~(RRHs) that are deployed close to the users and coordinated by a central processor~(CP)~\cite{andrews-etal2014what}. CRAN significantly improves both the spectral efficiency and energy efficiency compared to conventional cellular networks, due to the centralized resource allocation and joint signal processing over the RRHs at the CP~\cite{park-etal2013joint,zhou-yu2014optimized,liu-zhang2015optimized,jain-etal2016backhaul,zhuang-lau2014backhaul,dai-yu2014sparse,shi-etal2014group,fan-etal2016randomized,luo-etal2015downlink,tao-etal2016content,shi-etal2016smoothed,fan-etal2016dynamic,peng-etal2016recent,liu-etal2015joint,stephen-zhang2016green}. Combining a dense network with centralized joint processing as in the CRAN leads to a powerful new network architecture termed ultra-dense CRAN~(UD-CRAN)~\cite{shi-etal2015large,stephen-zhang2017joint}, in which the number of RRHs can even exceed the number of users being served in a given area to support wireless connectivity of ultra-high throughput. 

In a UD-CRAN, the RRHs exchange data and control signals with the CP via high-speed wired or wireless links which are referred to as the fronthaul. With a fully centralized architecture, where the CP performs all the encoding/decoding operations, the RRHs in a UD-CRAN can be simple relay nodes that transmit or receive quantized/compressed baseband signals over the fronthaul links~\cite{park-etal2013joint,zhou-yu2014optimized,liu-etal2015joint,liu-zhang2015optimized,jain-etal2016backhaul}. Although such an architecture provides the maximum joint signal processing gains, the fronthaul links can get saturated easily by the large volume of signals that need to be transmitted over them, especially in UD networks. On the other hand, in traditional cellular networks, the BSs themselves have encoding/decoding capability, and only the user messages are sent over the backhaul, which typically requires much less capacity compared to sending quantized signals over the fronthaul in CRAN. However, in such networks, the performance gains 
due to joint signal processing are compromised. This motivates an alternative \emph{hybrid} architecture for UD-CRAN as proposed in~\cite{bi-etal2015wireless}, by which the benefits of both the BS-centric local processing in conventional cellular networks and the centralized processing in CRAN can be achieved by adaptively switching between the two, based on e.g., the wireless channel conditions, the user rate requirements and the fronthaul constraints of the RRHs. 

In this paper, we consider the uplink transmission in an orthogonal frequency division multiple access~(OFDMA)-based UD-CRAN with hybrid decoding on each sub-channel~(SC), as shown in~\cref{F:SysModel}. On each SC, the RRHs can choose either to: decode the assigned user's message locally and then forward it to the CP~(namely, decode-and-forward~(DaF)), or simply quantize the signal and forward it to the CP for joint decoding~(termed forward-and-decode~(FaD)), or not process the received signal at all~(to save fronthaul capacity). Most of the existing work on CRAN considers single-channel systems~\cite{park-etal2013joint,zhou-yu2014optimized,liu-zhang2015optimized,jain-etal2016backhaul,zhuang-lau2014backhaul,dai-yu2014sparse,shi-etal2014group,fan-etal2016randomized,luo-etal2015downlink,tao-etal2016content,shi-etal2016smoothed,fan-etal2016dynamic,peng-etal2016recent}, where all the users transmit in the same bandwidth. Previously, a hybrid compression and message-sharing strategy was considered in~\cite{patil-yu2014hybrid} for the downlink transmission in a single-channel CRAN, where the CP shares the uncompressed messages of some users to a subset of RRHs, in addition to transmitting compressed messages of all users to each RRH. In contrast, in this paper we consider the OFDMA-based UD-CRAN with multiple SCs for the uplink transmission with hybrid decoding on each SC. Although allowing multiple users to transmit on the same SC can potentially increase the system throughput, it would require the more complex successive interference cancellation receiver on each SC at the CP and/or the RRHs, compared to the simple single-user decoder considered in this paper. The uplink transmission in an OFDMA-based CRAN, 
with only FaD processing on each SC was considered in~\cite{liu-etal2015joint}. In contrast, with the hybrid decoding considered in this paper, each RRH can choose to first decode the message in an SC locally, before forwarding it to the CP, in addition to quantizing and forwarding the signal to the CP as in~\cite{liu-etal2015joint}. For the downlink of a multi-hop fronthaul CRAN, it has been shown in~\cite{liu-yu2017cross} that multicasting user messages with the help of network coding can perform better than unicasting compressed messages with simple routing. 

Specifically, we consider a cluster of $M$ RRHs and $K$ users, both with single-antennas, in an OFDMA-based UD-CRAN with $N$ orthogonal SCs, 
as shown in~\cref{F:SysModel}. The signal transmitted by the user assigned to each SC can be either quantized and forwarded by a subset of RRHs for joint decoding at the CP, or instead, locally decoded and forwarded by a single RRH. Jointly decoding the message of a user assigned to any SC at the CP provides a receive-combining gain from multiple RRHs over the local decoding 
at a single RRH, and hence achieves a higher transmission rate for the user in general. However, to achieve this rate improvement, 
the participating RRHs need to forward the quantized signals with sufficiently low error to the CP, which consumes more of their fronthaul capacities compared to a decoded message forwarded by a single RRH. On the other hand, when an RRH forwards a decoded message, the fronthaul capacity is significantly saved, but the receive-combining gain achievable with centralized decoding at the CP is compromised, and only a selection diversity gain over the RRHs can be attained. Besides the selection of DaF or FaD processing modes, the achievable rate for each SC also depends on the user-SC assignment and transmit power allocations by the users. Thus, the RRHs' processing mode~(DaF/FaD/no processing) selections on each SC should be jointly optimized along with the resource allocation in the network. 
The main results of this paper are summarized as follows.
\begin{itemize} 
\item We study a joint resource allocation problem in UD-CRAN, including the RRHs' processing mode selections, user-SC assignments and the users' transmit power allocations for OFDMA transmission, to maximize the users' weighted-sum-rate in the uplink transmission. To the best of our knowledge, this problem has not been considered yet in the literature. The problem, however, is non-convex, and an exhaustive search over all possible solutions would incur an exponential complexity of $O((K(2^M+M))^N)$, which is evidently not practically affordable in a UD-CRAN with large values of $N$ and/or $M$.
\item Hence, we propose a Lagrange duality based algorithm, which can achieve the optimal solution asymptotically when the number of SCs is large, at a much reduced complexity of $O(NK(2^M+M))$. 
\item To further reduce the complexity in UD networks, we propose a greedy algorithm to obtain a suboptimal solution, which has a lower complexity of only $O(NK(M^2+M))$, and is shown to be able to achieve close-to-optimal throughput performance under various practical setups by simulations. 
\item Finally, we compare the proposed optimal and suboptimal algorithms with hybrid decoding to both the conventional CRAN that performs FaD processing on all SCs, and a cellular network that performs DaF processing on all SCs by simulations, which show promising throughput gains by the proposed algorithms.
\end{itemize}

{\it Notation}: In this paper, scalars are denoted by lower-case letters, e.g., $x$, while vectors are denoted by bold-face lower-case letters, e.g., $\bm x$. The set of real numbers, complex numbers, and integers are denoted by $\mathbbm{R}$, $\mathbbm{C}$ and $\mathbbm{Z}$, respectively. Similarly, $(\cdot)^{x\times 1}$ denotes the corresponding space of $x$-dimensional column vectors, while $\mathbbm{R}_+$ and $\mathbbm{Z}_{++}$ denote the sets of non-negative real numbers and positive integers respectively. For a real scalar $x\in\mathbbm{R}$, $[x]^+\triangleq\max\{x,0\}$, 
and for a complex scalar $x\in\mathbbm{C}$, $|x|\geq 0$ denotes the magnitude of $x$. 
For a vector $\bm x$, $\bm x^\mathsf{T}$ denotes its transpose. Vectors with all elements equal to $1$ and $0$ are denoted by $\bm 1$ and $\bm 0$, respectively. For 
real vectors $\bm x,\bm y\in\mathbbm{R}^{M\times 1}$, 
$\bm x\succeq\bm y$ 
denotes the 
component-wise inequalities $x_i\geq y_i, i=1,\dotsc,M$. Finally, $\mathcal{CN}(\mu,\sigma^2)$ denotes 
a circularly symmetric complex Gaussian~(CSCG) random variable with mean $\mu$ and variance $\sigma^2$, and the symbol $\sim$ is used to mean  ``distributed as". 
\section{System Model}\label{Sec:SysMod}
\begin{figure}[t]
\centering
\includegraphics[width=\linewidth]{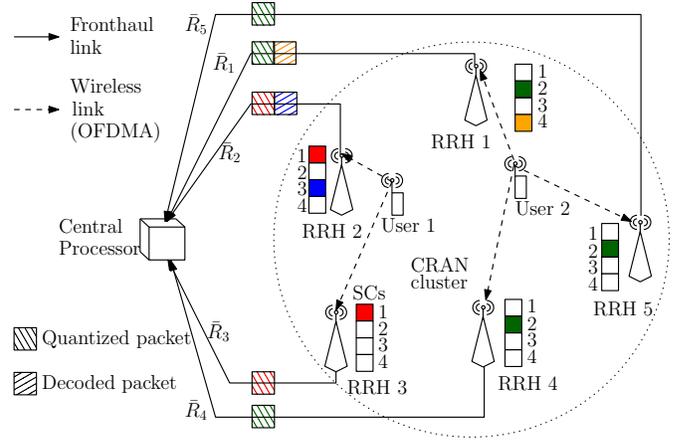}
\caption{OFDMA-based uplink CRAN with hybrid-decoding RRHs.}\label{F:SysModel}
\end{figure}
We study the uplink transmission in a UD-CRAN cluster that consists of $M$ single-antenna RRHs, denoted by the set $\mathcal{M}\triangleq\{1,\dotsc,M\}$, and $K$ single-antenna users, denoted by $\mathcal{K}\triangleq\{1,\dotsc,K\}$, as shown in~\cref{F:SysModel}. The users transmit their signals in the uplink using OFDMA over a total bandwidth of $B$~Hz 
which is equally divided into $N$ orthogonal SCs, denoted by the set 
$\mathcal{N}\triangleq\{1,\dotsc,N\}$. 
These signals are received by the RRHs, and are then forwarded 
to the CP via the individual fronthaul links of the RRHs. As the fronthaul link capacity is practically limited for each RRH, one of the following operations are chosen to be performed on each SC $n\in\mathcal{N}$ by each RRH $m\in\mathcal{M}$: 
\begin{itemize}
\item The user's message is decoded and forwarded to the CP~(DaF); 
\item The user's signal is quantized and forwarded to the CP, which then decodes the message~(FaD); or 
\item The received signal is not processed. 
\end{itemize}
For example, an illustration is given in~\cref{F:SysModel} with 5~RRHs and 2~users, where the SCs~1 and~3 are assigned to user~1, while SCs~2 and~4 are assigned to user~2. In this case, RRH~2 which is nearest to user~1, decodes the message in SC~3 and forwards it to the CP. Since this message is decoded by RRH~2, the other RRHs do not process the signal on SC~3. On the other hand, user~1's signal in SC~1 is quantized independently by each of the RRHs~2~and~4, and forwarded to the CP, which then decodes the message in SC~1 by combining the quantized signals from these two RRHs. Similarly for user~2, while RRH~1 locally decodes and forwards the message in SC~4, RRHs~1, 4~and~5 quantize and forward their respective received signals in SC~2. Notice that RRH~1, despite its proximity to user~1, does not process user~1's signal on SC~1, since its fronthaul is already fully consumed by user~2's quantized signal in SC~2 and the decoded message in SC~4. 

In order to describe the choice of operations outlined above completely, we introduce the following decision variables. First, to indicate whether an RRH $m$ processes the signal on SC $n$ or not, we define the variables $\alpha_{m,n}$ 
as  
\begin{align}
\alpha_{m,n}\triangleq\begin{cases}1&\text{if RRH}~m~\text{processes the signal on SC}~n\\
0&\text{otherwise,}\end{cases}
\label{E:ProcSel}
\end{align}
$m\in\mathcal{M}$ and $n\in\mathcal{N}$. The collection of these variables for all RRHs at each SC $n\in\mathcal{N}$ is denoted by the vector $\bm\alpha_n\triangleq\begin{bsmallmatrix}\alpha_{1,n}&\cdots&\alpha_{M,n}\end{bsmallmatrix}^\mathsf{T}\in\{0,1\}^{M\times 1}$. 
Next, to indicate the mode of processing chosen for SC $n$, i.e. whether the message in SC $n$ is decoded by an RRH or instead the signal in SC $n$ is quantized and forwarded to the CP, we define the mode selection 
variable $\delta_n$ as 
\begin{align}
\delta_n\triangleq\begin{cases}1&\text{if the message in SC}~n~\text{is decoded 
}\\
0&\text{if the signal in SC}~n~\text{is quantized, 
}\end{cases}\label{E:ModeSel}
\end{align}
$n\in\mathcal{N}$. 
In the DaF mode~$(\delta_n=1)$, at most one RRH in the network processes the 
signal in SC $n$, and hence we impose the condition $\bm 1^\mathsf{T}\bm\alpha_n\leq 1$. 
On the other hand, 
in the FaD mode~
$(\delta_n=0)$, a subset $\mathcal{A}_n\subseteq\mathcal{M}$ consisting of one or more RRHs, independently quantize and forward the signals in SC $n$, where $\mathcal{A}_n\triangleq\{m\in\mathcal{M}|\alpha_{m,n}=1\},~n\in\mathcal{N}$.
Finally, we define the variable $\nu_{k,n}$ to indicate whether an SC $n$ is assigned to a user $k$ or not, i.e., 
\begin{align}
\nu_{k,n}\triangleq\begin{cases}1&\text{if SC}~n~\text{is assigned to user}~k\\
0&\text{otherwise},
\end{cases}\label{E:UserSel}
\end{align}
$k\in\mathcal{K},n\in\mathcal{N}$. The vector $\bm\nu_n\triangleq\begin{bsmallmatrix}
\nu_{1,n}&\cdots&\nu_{K,n}
\end{bsmallmatrix}^\mathsf{T}\in\{0,1\}^K$ specifies the user assignment for SC $n$. According to OFDMA, each SC $n\in\mathcal{N}$ is assigned to at most one user for uplink transmission, and thus 
$\bm\nu_n$ must satisfy the condition $\bm 1^\mathsf{T}\bm\nu_n\leq 1,\forall n\in\mathcal{N}$. Also, we denote the set of SCs assigned to user $k\in\mathcal{K}$ by $\mathcal{N}_k\triangleq\{n|\nu_{k,n}=1\}\subseteq\mathcal{N}$, so that 
$\mathcal{N}_j\cap\mathcal{N}_k=\emptyset,\ \forall j\neq k,\ j,k\in\mathcal{K}$.

Let $h_{m,k,n}\in\mathbbm{C}$ denote the complex wireless channel coefficient from the user $k\in\mathcal{K}$ to RRH $m\in\mathcal{M}$, for SC $n\in\mathcal{N}$. We assume that all the channel coefficients $h_{m,k,n}$'s are known at the CP. Let $p_{k,n}\geq 0$ denote the uplink transmit power allocated by the user $k\in\mathcal{K}$ to SC $n\in\mathcal{N}$ and the vector $\bm p_n\triangleq\begin{bsmallmatrix}
p_{1,n}&\cdots&p_{K,n}
\end{bsmallmatrix}^\mathsf{T}\in\mathbbm{R}_+^K$ denote the transmit powers 
of all the users on SC $n\in\mathcal{N}$. 
Then, the 
signal received at RRH $m$ in SC $n\in\mathcal{N}_k$ 
is 
given by
\begin{align}
y_{m,n}=h_{m,k,n}\sqrt{p_{k,n}}s_{k,n}+z_m,
\label{E:RxSig}
\end{align} where $s_{k,n}\sim\mathcal{CN}(0,1)$ denotes user $k$'s information-bearing signal that is assumed to be complex Gaussian 
and $z_m\sim\mathcal{CN}(0,\sigma_m^2)$ is the additive-white-Gaussian-noise~(AWGN), with $\sigma_m^2$ denoting the receiver noise power at RRH $m$~(assumed to be equal for all SCs). 
\subsection{Forward-and-Decode~(FaD) Processing}\label{SS:JD}
When FaD processing~$(\delta_n=0)$ is performed on an SC $n\in\mathcal{N}_k$, the following operations take place. First, the RRHs $m\in\mathcal{A}_n$, i.e. with $\alpha_{m,n}=1$, quantize the received baseband signals $y_{m,n}$'s in SC $n$ given by~\eqref{E:RxSig}. Then, the RRHs encode these quantized values into digital codewords, and forward them 
to the CP. The CP recovers the quantized signals from these digital codewords, and decodes user $k$'s message on SC $n\in\mathcal{N}_k$, by combining the quantized signals from the RRHs. As the received signals at the RRHs in all the SCs are independent of each other and each RRH is assumed to perform signal quantization independently, a simple scalar quantization~(SQ) on the received signals $y_{m,n}$'s 
is performed~\cite{liu-etal2015joint}, and the baseband signal after SQ can be expressed as
\begin{align}
\hat{y}_{m,n}=y_{m,n}+e_{m,n},\ m\in\mathcal{M},n\in\mathcal{N},\label{E:QuantRxSig}
\end{align}
where $e_{m,n}$ denotes the complex-valued error induced by the SQ, which is assumed to have zero mean and a variance denoted by $q_{m,n}$. The errors $e_{m,n}$'s are independent over $m$ and $n$ due to the independent SQ in each SC by each RRH. A practical method for SQ is to represent the in-phase~(I) and quadrature~(Q) components of the received complex baseband signal $y_{m,n}$ in~\eqref{E:RxSig} using $\beta_m\in\mathbbm{Z}_{++}$ bits for each.\footnote{It is assumed that the resolution of the SQ, $\beta_m$, is fixed, and is the same on all SCs for a given RRH $m\in\mathcal{M}$.} In other words, each of the I~and~Q components of $y_{m,n}$ are uniformly quantized into one of $2^{\beta_m}$ levels~\cite{liu-etal2015joint}. With such a uniform SQ, 
the variance $q_{m,n}$ of the quantization error $e_{m,n}$ in~\eqref{E:QuantRxSig} on SC $n\in\mathcal{N}_k$ at RRH $m\in\mathcal{M}$ can be approximated as~\cite{liu-etal2015joint} 
\begin{align}
q_{m,n}=3(|h_{m,k,n}|^2p_{k,n}+\sigma_m^2)2^{-2\beta_m}
.\label{E:QNVar}
\end{align}

Next, each RRH $m\in\mathcal{A}_n$ transmits the digital code-words corresponding to the quantization levels to the CP via its fronthaul link. Since each I and Q component of the signal in an SC is represented by a $\beta_m$-bit codeword, it is not difficult to show that the transmission rate in bits-per-second~(bps) required on the fronthaul link of RRH $m$ to forward the quantized signal in any SC $n$ is given by $2B\beta_m/N$. Upon receiving the digital code-words, the CP first recovers the baseband quantized signals $\hat{y}_{m,n}$'s based on the quantization code-books used by each RRH. Further, to decode the message in SC $n$, the CP applies a linear combining to $\hat{y}_{m,n}$'s. When the optimal combining weights that maximize the signal-to-noise ratio~(SNR) are used, the received SNR for decoding at the CP on SC $n\in\mathcal{N}_k$ can be expressed as~\cite{liu-etal2015joint} 
\begin{align}
\gamma^Q_{k,n}(\bm\alpha_n,p_n)&=\sum_{m\in\mathcal{M}}\frac{\alpha_{m,n}|h_{m,k,n}|^2p_n}{\sigma_m^2+q_{m,n}}\notag\\
&=\sum_{m\in\mathcal{M}}\alpha_{m,n}\gamma^Q_{m,k,n}(p_{k,n}),
\label{E:RxSNRJD}
\end{align}
where $q_{m,n}$ is 
given by~\eqref{E:QNVar} and $\gamma^Q_{m,k,n}(p_{k,n})$ denotes the contribution of each RRH $m$ to the SNR at the CP in SC $n$. Note that if $\alpha_{m,n}=0$ for some RRH $m\in\mathcal{M}$, it means RRH $m$ does not process the signal on SC $n$ and hence does not contribute to the received SNR at the CP.
Using~\eqref{E:QNVar} in~\eqref{E:RxSNRJD}, $\gamma^Q_{m,k,n}(p_{k,n})$ can be expressed as 
\begin{align}
\gamma^Q_{m,k,n}(p_{k,n})=\frac{|h_{m,k,n}|^2p_{k,n}}{\sigma_m^2+3\left(|h_{m,k,n}|^2p_{k,n}+\sigma_m^2\right)2^{-2\beta_m}},
\label{E:PartSNR}
\end{align}
$m\in\mathcal{M},k\in\mathcal{K},n\in\mathcal{N}_k$. If the quantization error $e_{m,n}$ is assumed to be the worst-case Gaussian distributed, 
then a lower bound on the achievable rate in bps with FaD processing for SC $n\in\mathcal{N}_k$ 
is given by 
\begin{align}
&r^Q_{k,n}(\bm\alpha_n,p_{k,n})=(B/N)\log_2(1+\gamma^Q_{k,n}(\bm\alpha_n,p_{k,n}))
.\label{E:SCRJ}
\end{align}
Next, we present the following lemma.
\begin{lemma}\label{L:rknQConc}
With given RRH selection $\bm\alpha_n$, 
$r^Q_{k,n}(\bm\alpha_n,p_{k,n})$ defined in~\eqref{E:SCRJ} is concave in $p_{k,n}\geq 0$.
\end{lemma}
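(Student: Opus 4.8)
The plan is to exploit the additive decomposition of the combined SNR in~\eqref{E:RxSNRJD} together with the standard rule that a concave, nondecreasing function composed with a concave function is concave. With $\bm\alpha_n$ fixed, it suffices to show that $p_{k,n}\mapsto\gamma^Q_{k,n}(\bm\alpha_n,p_{k,n})$ is concave (and in fact nondecreasing) on $p_{k,n}\ge 0$, and then to push this through the $\log$.

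First I would look at a single RRH's contribution $\gamma^Q_{m,k,n}(p_{k,n})$ from~\eqref{E:PartSNR}. Writing $a\triangleq|h_{m,k,n}|^2\ge 0$, $\sigma^2\triangleq\sigma_m^2>0$ and $c\triangleq 2^{-2\beta_m}\in(0,1)$, this contribution takes the linear-fractional form
\begin{align}
\gamma^Q_{m,k,n}(p_{k,n})=\frac{a\,p_{k,n}}{b_1+b_2\,p_{k,n}},\qquad b_1\triangleq\sigma^2(1+3c)>0,\quad b_2\triangleq 3ca\ge 0.\notag
\end{align}
If $a=0$ this is identically zero; if $a>0$ it equals $\tfrac{a}{b_2}\bigl(1-\tfrac{b_1}{b_1+b_2 p_{k,n}}\bigr)$, i.e.\ a constant minus a positive multiple of the convex, decreasing map $p\mapsto 1/(b_1+b_2 p)$ on $[0,\infty)$. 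Hence $\gamma^Q_{m,k,n}(\cdot)$ is concave and nondecreasing on $[0,\infty)$; equivalently, one may just verify $\tfrac{d}{dp}\gamma^Q_{m,k,n}(p)=ab_1/(b_1+b_2p)^2\ge 0$ and $\tfrac{d^2}{dp^2}\gamma^Q_{m,k,n}(p)=-2ab_1b_2/(b_1+b_2p)^3\le 0$.

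Next, since each $\alpha_{m,n}\in\{0,1\}\ge 0$ is fixed, the combined SNR $\gamma^Q_{k,n}(\bm\alpha_n,p_{k,n})=\sum_{m\in\mathcal{M}}\alpha_{m,n}\gamma^Q_{m,k,n}(p_{k,n})$ is a nonnegative weighted sum of concave, nondecreasing functions of $p_{k,n}$, and is therefore itself concave and nondecreasing on $p_{k,n}\ge 0$; the degenerate case $\bm\alpha_n=\bm 0$ (so $\gamma^Q_{k,n}\equiv 0$) is covered trivially. Finally I would invoke the composition rule: $x\mapsto\log_2(1+x)$ is concave and \emph{nondecreasing} on $x\ge 0$, so its composition with the concave map $p_{k,n}\mapsto\gamma^Q_{k,n}(\bm\alpha_n,p_{k,n})$ is concave in $p_{k,n}$, and scaling by the positive constant $B/N$ preserves concavity, giving the claim for $r^Q_{k,n}(\bm\alpha_n,p_{k,n})$ in~\eqref{E:SCRJ}. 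There is no real obstacle here; the only point requiring attention is to confirm that the outer function $\log_2(1+\cdot)$ is nondecreasing before applying the composition rule (concavity of the inner and outer functions alone would not imply concavity of the composition), which is immediate.
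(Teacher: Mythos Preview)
Your proof is correct and follows essentially the same route as the paper's own argument: establish concavity of each per-RRH SNR term $\gamma^Q_{m,k,n}(\cdot)$ (the paper does this by checking the sign of the second derivative, which you also mention), pass to the nonnegative sum, and then apply the concave--nondecreasing composition rule with $\log_2(1+\cdot)$. Your linear-fractional rewriting is a slightly cleaner way to see the same fact, but the overall structure is identical.
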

\begin{proof}
Please refer to~appendix~\ref{App:ProofLrknQConc}.
\end{proof}
\cref{L:rknQConc} indicates that on each SC $n$, the achievable rate with FaD processing is a concave function of the assigned user's transmit power. 
\subsection{Decode-and-Forward~(DaF) Processing}
Instead of having the CP perform the decoding by combining the quantized signals from the RRHs, any single selected RRH $m$ can also decode the message in SC $n$ locally, and then forward this message to the CP. The maximum achievable rate in bps for SC $n\in\mathcal{N}_k$ when RRH $m$ locally decodes user $k$'s message is given by
\begin{align}
r^D_{m,k,n}(p_{k,n})=(B/N)\log_2(1+|h_{m,k,n}|^2p_{k,n}/\sigma_m^2)
.\label{E:SCRL}
\end{align}
In this case, RRH $m$ forwards the decoded message on SC $n\in\mathcal{N}_k$ to the CP over its fronthaul link at the rate of at least $r^D_{m,k,n}(p_{k,n})$. Also, notice that if some RRH $m\in\mathcal{M}$ locally decodes the message in SC $n$ and forwards it to the CP, no other RRHs need to process the signal on this SC. 
\subsection{Hybrid Decoding}
Here, we present the proposed hybrid decoding scheme, where either a subset of RRHs $\mathcal{A}_n\subseteq\mathcal{M}$ quantize and forward their received signals to be 
jointly decoded at the CP as in a conventional CRAN, or a single RRH $m\in\mathcal{M}$ locally decodes a user's message on SC $n$ and forwards it to the CP. 
Combining the expressions for the achievable rates in these two cases as given in~\cref{E:SCRJ,E:SCRL} with the indicator variables defined in~\cref{E:ModeSel,E:ProcSel}, the achievable rate with hybrid decoding on an SC $n\in\mathcal{N}_k,\,k\in\mathcal{K},$ 
is given by 
\begin{align}
r_{k,n}(\delta_n,\bm\alpha_n,p_{k,n})&=\delta_n\sum_{m\in\mathcal{M}}\alpha_{m,n}r^D_{m,k,n}(p_{k,n})\notag\\
&\quad+(1-\delta_n)r^Q_{k,n}(\bm\alpha_n,p_{k,n}).
\label{E:SCR}
\end{align}
Note that if $\bm\alpha_n=\bm 0$, no RRH processes the received signals and the achievable rate on SC $n$ 
will be zero, irrespective of the value of 
$\delta_n$. Also, for an SC $n\in\mathcal{N}_k$, if $p_{k,n}=0$, we assume $\bm\alpha_n=\bm 0$ without loss of generality. 

Let $\bar{R}_m$ denote the 
fronthaul capacity of the link connecting RRH $m$ to the CP in bps. Then, the total rate at which RRH $m$ 
forwards the processed signals~(decoded and/or quantized) over all the $N$ SCs from the users to the CP 
must satisfy the constraints 
\begin{align}
&\sum_{n\in\mathcal{N}}\bigg(\delta_n\alpha_{m,n}\sum_{k\in\mathcal{K}}\nu_{k,n}r^D_{m,k,n}(p_{k,n})\notag\\
&\qquad+(1-\delta_n)\frac{2B\beta_m\alpha_{m,n}}{N}\bigg)\leq \bar{R}_m,\,\forall m\in\mathcal{M}.
\label{E:IFHRC}
\end{align}
The left-hand side of~\eqref{E:IFHRC} represents the total rate at which the signals processed by RRH $m$ must be  forwarded to the CP, 
where $r^D_{m,k,n}(p_{k,n})$ is defined in~\eqref{E:SCRL}, and $2B\beta_m/N$ is the rate in bps corresponding to the resolution $\beta_m$ of the SQ performed by RRH $m$. 
In the next section, we formulate the joint uplink resource allocation optimization problem for the OFDMA-based UD-CRAN with hybrid decoding.
\section{Problem Formulation}\label{Sec:ProbForm}
We aim to maximize the weighted-sum-rate of the users in the uplink over all the SCs, by jointly optimizing 
the processing mode and RRH selections, which are collectively represented by the variables $\{\delta_n,\bm\alpha_n\}_{n\in\mathcal{N}}$, the user-SC assignments $\{\bm\nu_n\}_{n\in\mathcal{N}}$, and the users' transmit power allocations $\{\bm p_n\}_{n\in\mathcal{N}}$, subject to the fronthaul constraints~\eqref{E:IFHRC}, and the total power constraints at each of the users, which we denote by $\bar{P}_k,\ k\in\mathcal{K}$. Let $\omega_k\geq 0$ denote the rate weight for user $k\in\mathcal{K}$. Then the  problem can be formulated as given below. 
\begin{subequations}
\label{P:ULHybMain}
\begin{align}
\mathop{\mathrm{maximize}}_{\{\bm p_n,\bm\alpha_n,\bm\nu_n,\delta_n\}_{n\in\mathcal{N}}
}&\sum_{k\in\mathcal{K}}\omega_k\sum_{n\in\mathcal{N}}\nu_{k,n}r_{k,n}(\delta_n,\bm\alpha_n,p_{k,n})\tag{\ref{P:ULHybMain}}\\
\mathrm{subject}~\mathrm{to}\notag\\
&\ \text{\cref{E:IFHRC}}\notag\\
&\ \sum_{n\in\mathcal{N}} p_{k,n}\leq\bar{P}_k\quad\forall k\in\mathcal{K}
\label{C:PCULHybMain}\\
&\ p_{k,n}\geq 0\quad\forall k\in\mathcal{K},\forall n\in\mathcal{N}\label{C:PosPowULHybMain}\\
&\ \delta_n(\bm 1^\mathsf{T}\bm \alpha_n)\leq 1\quad\forall n\in\mathcal{N}\label{C:LDConstULHybMain}\\
&\ \alpha_{m,n}\in\{0,1\}\quad\forall m\in\mathcal{M},\forall n\in\mathcal{N}\label{C:RRHSelULHybMain}\\
&\ \bm 1^\mathsf{T}\bm\nu_n\leq 1\quad\forall n\in\mathcal{N}\label{C:UASUULHybMain}\\
&\ \nu_{k,n}\in\{0,1\}\quad\forall k\in\mathcal{K},\forall n\in\mathcal{N}\label{C:UAVectULHybMain}\\
&\ \delta_n\in\{0,1\}\quad\forall n\in\mathcal{N}.\label{C:ModeSelULHybMain}
\end{align}
\end{subequations}
The constraint~\eqref{C:LDConstULHybMain} ensures that when DaF processing is performed~$\left(\delta_n=1\right)$, at most one RRH processes the signal on SC $n$, i.e. at most one $\alpha_{m,n}=1$, 
and all other $\alpha_{m,n}$'s are zero. In the case of FaD processing~
$\left(\delta_n=0\right)$, no such restriction is imposed on $\bm\alpha_n$. The above weighted-sum-rate maximization problem is non-convex due to the integer constraints~\cref{C:RRHSelULHybMain,C:ModeSelULHybMain} and the coupled variables in the objective, as well as in constraints~\cref{E:IFHRC,C:LDConstULHybMain}. Notice that if $\delta_n=0\ \forall n\in\mathcal{N}$, then problem~\cref{P:ULHybMain} reduces to the special case where all RRHs quantize and forward their signals to the CP as in an OFDMA-based CRAN with only centralized decoding at the CP~\cite{liu-etal2015joint}. On the other hand, if $\delta_n=1\ \forall n\in\mathcal{N}$, 
problem~\cref{P:ULHybMain} reduces to the special case of an OFDMA-based cellular network where the user messages on different SCs can be decoded by different RRHs in general and then forwarded to the CP over the fronthauls. This is unlike the conventional cellular network where each user's signal is decoded by only the single BS to which it is associated. In the general case, the optimal solution to the above problem could have $\delta_n=0$ only on a subset of SCs, and $\delta_n=1$ for other SCs, in order to maximally exploit the hybrid decoding capability. 

It is worth noting that in the solution to problem~\eqref{P:ULHybMain}, if $p_{k,n}=0,~\forall n\in\mathcal{N}$ for some user $k$, it means that user $k$ is not served at all in the current scheduling interval. The resulting fairness issue can be overcome in practice by selecting the user rate-weights $\omega_k$'s in problem~\eqref{P:ULHybMain} appropriately. For example, if one user receives very low rate in the current scheduling interval, its priority weight can be increased to ensure that it gets a higher rate in the next interval. The user weights can thus be manipulated to ensure that all the users in the network are fairly served in the long term. 
\section{Proposed Solutions}\label{Sec:PropSol}
\subsection{Optimal Solution}\label{SS:OptSol}
Although problem~\eqref{P:ULHybMain} is non-convex, 
its duality gap diminishes to zero as the number of SCs $N$ goes to infinity, as under this condition such non-convex problems are shown to satisfy the ``time-sharing" property~\cite{yu-lui2006dual}. Since $N$ is typically large in practice, we propose to apply the Lagrange duality method to obtain an asymptotically optimal solution to problem~\eqref{P:ULHybMain}.\footnote{However, from the simulations in~\cref{Sec:SimResults}, we observe that the duality gap is negligible even for a moderate value of $N=64$.} Let $\lambda_m\geq 0,\ m\in\mathcal{M}$ denote the dual variables associated with the $M$  constraints in~\eqref{E:IFHRC}, and $\mu_k\geq 0,\ k\in\mathcal{K}$, denote the dual variables for the $K$ constraints in~\eqref{C:PCULHybMain}. Also let the vectors $\bm \lambda\triangleq\begin{bsmallmatrix}\lambda_1&\cdots&\lambda_M\end{bsmallmatrix}^\mathsf{T}\in\mathbbm{R}_+^{M\times 1}$ and $\bm\mu\triangleq\begin{bsmallmatrix}
\mu_1&\cdots&\mu_K
\end{bsmallmatrix}^\mathsf{T}\in\mathbbm{R}_+^{K\times 1}$ denote the collections of these dual variables. 
Then, the~(partial) Lagrangian of problem~\eqref{P:ULHybMain} is given by 
\begin{align}
&L\left(\{\delta_n,\bm\nu_n,\bm\alpha_n,\bm p_n\}_{n\in\mathcal{N}},\bm\lambda,\bm\mu\right)\notag\\
&=\sum_{k\in\mathcal{K}}\omega_k\sum_{n\in\mathcal{N}}\nu_{k,n}r_{k,n}(\delta_n,\bm\alpha_n,p_{k,n})
-\sum_{m\in\mathcal{M}}\lambda_m\notag\\
&\quad\cdot\Bigg(\frac{1}{\bar{R}_m}\sum_{n\in\mathcal{N}}\Big(\delta_n\alpha_{m,n}\sum_{k\in\mathcal{K}}\nu_{k,n}r^D_{m,k,n}(p_{k,n})\notag\\
&\quad+(1-\delta_n)\frac{2B\beta_m\alpha_{m,n}}{N}\Big)-1\Bigg)
-\sum_{k\in\mathcal{K}}\mu_k\bigg(\sum_{n\in\mathcal{N}} p_{k,n}-\bar{P}_k\bigg)\notag\\
&=\sum_{n\in\mathcal{N}} L_n(\delta_n,\bm\nu_n,\bm\alpha_n,\bm p_n,\bm\lambda,\bm\mu)+\sum_{m\in\mathcal{M}}\lambda_m+\sum_{k\in\mathcal{K}}\mu_k\bar{P}_k,
\label{E:LagW}
\end{align} 
where each term in the first summation in~\eqref{E:LagW} is 
\begin{align}
&L_n\left(\delta_n,\bm\nu_n,\bm\alpha_n,\bm p_n,\bm\lambda,\bm\mu\right)\notag\\
&\triangleq\sum_{k\in\mathcal{K}}\omega_k\nu_{k,n}\bigg(\delta_n\sum_{m\in\mathcal{M}}\alpha_{m,n}r^D_{m,k,n}(p_{k,n})\notag\\
&\quad+(1-\delta_n)r^Q_{k,n}(\bm\alpha_n,p_{k,n})\bigg)
-\sum_{m\in\mathcal{M}}\frac{\lambda_m}{\bar{R}_m}\notag\\
&\quad\cdot\bigg(\delta_n\alpha_{m,n}\sum_{k\in\mathcal{K}}\nu_{k,n}r^D_{m,k,n}(p_{k,n})+(1-\delta_n)\frac{2B\beta_m\alpha_{m,n}}{N}\bigg)\notag\\
&\quad-\sum_{k\in\mathcal{K}}\mu_k p_{k,n},\quad n\in\mathcal{N}.\label{E:Lagn}
\end{align}
The Lagrange dual function can thus be expressed as
\begin{subequations}
\label{E:DualFunc} 
\begin{align}
g(\bm\lambda,\bm\mu)=\max_{\substack{\{\bm p_n,\bm\alpha_n,\\
\bm\nu_n,\delta_n\}_{n\in\mathcal{N}}}}&\,L\left(\{\delta_n,\bm\nu_n,\bm\alpha_n,\bm p_n,\}_{n\in\mathcal{N}},\bm\lambda,\bm\mu\right)\tag{\ref{E:DualFunc}}\\
\mathrm{s.t.}&\,\text{\cref{C:PosPowULHybMain,C:LDConstULHybMain,C:ModeSelULHybMain}}.\notag
\end{align}
\end{subequations}
The maximization problem in~\eqref{E:DualFunc} can be decomposed into $N$ parallel sub-problems using~\eqref{E:LagW}, where each sub-problem corresponds to a single SC $n$, and has the following structure,
\begin{subequations} 
\label{P:DualFuncn}
\begin{align}
\max_{\bm p_n,\bm\alpha_n,\bm\nu_n,\delta_n}&\:L_n(\delta_n,\bm\nu_n,\bm\alpha_n,\bm p_n,\bm\lambda,\bm\mu)\tag{\ref{P:DualFuncn}}\nonumber\\
\mathrm{s.t.}&\:\bm p_n\succeq\bm 0\label{C:PNZDFn}\\
&\ \delta_n(\bm 1^\mathsf{T}\bm\alpha_n)\leq 1\label{C:ModeRRHSelDFn}\\
&\ \bm\alpha_n\in\{0,1\}^{M\times 1}\label{C:RRHSelVectDFn}\\
&\ \bm 1^\mathsf{T}\bm\nu_n\leq 1\label{C:UASUDFn}\\
&\ \bm\nu_n\in\{0,1\}^{K\times 1}\label{C:UAVectDFn}\\
&\ \delta_n\in\{0,1\},\label{C:ModeDFn}
\end{align}
\end{subequations}
with the objective function given by~\eqref{E:Lagn}. Problem~\eqref{P:DualFuncn} is non-convex due to the integer constraints~\cref{C:RRHSelVectDFn,C:UAVectDFn,C:ModeDFn} 
and also due to the coupled variables in the objective and the constraint~\eqref{C:ModeRRHSelDFn}. However, since the processing mode selection variable $\delta_n$ takes on only two values 0 or 1, we solve problem~\eqref{P:DualFuncn} for each of these two cases separately, and then choose the 
value of $\delta_n$ that gives the maximum objective value for SC $n$. Specifically, let $\bar{L}^Q_n$ denote the optimal value of problem~\eqref{P:DualFuncn} for the FaD mode~$(\delta_n=0)$ and $\bar{L}^D_n$ denote the 
corresponding value for the DaF mode~$(\delta_n=1)$. Then, for given dual variables $\bm\lambda$ and $\bm\mu$, the optimal processing mode selection $\bar{\delta}_n$ for problem~\eqref{P:DualFuncn} is given by  
\begin{align}
\bar{\delta}_n=\begin{cases}
0&\text{if }\bar{L}^{Q}_n>\bar{L}^{D}_n\\
1&\text{otherwise.}\label{E:OptModeDFn}
\end{cases}
\end{align}
In the following two subsections, we explain in detail how to solve problem~\eqref{P:DualFuncn} for each processing mode, in order to obtain $\bar{L}^{Q}_n$ and $\bar{L}^{D}_n$.
\subsubsection{The Case of FaD Processing~$\left(\delta_n=0\right)$}\label{SS:FD}
With 
$\delta_n=0$, 
problem~\eqref{P:DualFuncn} reduces to 
\begin{subequations} 
\label{P:DFnJD}
\begin{align}
\max_{\bm p_n,\bm\alpha_n,\bm\nu_n}&\ L^Q\left(\bm\nu_n,\bm\alpha_n,\bm p_n,\bm\lambda,\bm\mu\right)\tag{\ref{P:DFnJD}}\\
\mathrm{s.t.}&\ \text{\cref{C:PNZDFn,C:RRHSelVectDFn,C:UASUDFn,C:UAVectDFn}},\notag
\end{align}
\end{subequations}
where the objective function is given by
\begin{align}
&L^Q(\bm\nu_n,\bm\alpha_n,\bm p_n,\bm\lambda,\bm\mu)\notag\\
&\triangleq\sum_{k\in\mathcal{K}}\nu_{k,n}\omega_kr_{k,n}^Q\left(\bm\alpha_n,p_{k,n}\right)
-\frac{2B}{N}\sum_{m\in\mathcal{M}}\frac{\beta_m\lambda_m\alpha_{m,n}}{\bar{R}_m}\notag\\
&\quad-\sum_{k\in\mathcal{K}}\mu_k p_{k,n}\label{E:ObjDFnJD}
\end{align}
Now, let the user assignment on SC $n$ be fixed as $\bm\nu_n=\hat{\bm\nu}_n$. 
If no user is assigned to SC $n$, i.e. $\hat{\bm\nu}_n=\bm 0$, it is evident from~\eqref{E:ObjDFnJD} that the objective of problem~\cref{P:DFnJD} 
is maximized by setting $\bm p_n=\bm 0$ and $\bm\alpha_n=\bm 0$, 
i.e. 
the power allocation is zero and no RRH processes the signal, as expected. Otherwise, if 
SC $n$ is assigned to a user $\hat{k}_n\in\mathcal{K}$, 
so that $\hat{\nu}_{\hat{k}_n,n}=1$, and $\hat{\nu}_{k,n}=0~\forall k\neq\hat{k}_n$, then problem~\eqref{P:DFnJD} can be written as 
\begin{subequations}
\label{P:DFnJDFixUA}
\begin{align}
\max_{\substack{p_{\hat{k}_n,n}\geq 0,\\
\bm\alpha_n\in\{0,1\}^M}}&\,\omega_{\hat{k}_n}r_{\hat{k}_n,n}^Q\big(\bm\alpha_n,p_{\hat{k}_n,n}\big)-\frac{2B}{N}\sum_{m\in\mathcal{M}}\frac{\beta_m\lambda_m\alpha_{m,n}}{\bar{R}_m}\notag\\
&\,-\mu_{\hat{k}_n}p_{\hat{k}_n,n}.\tag{\ref{P:DFnJDFixUA}}
\end{align}
\end{subequations}
Although the above problem~\eqref{P:DFnJDFixUA} is non-convex due to the integer variables $\bm\alpha_n$ and the coupled variables in the objective, when the selection of RRHs is also fixed as $\bm\alpha_n=\tilde{\bm\alpha}_n$, it reduces to the problem
\begin{subequations} 
\label{P:DFnJDFUARRHSel}
\begin{align}
\max_{p_{\hat{k}_n,n}\geq 0}&\:\omega_{\hat{k}_n}r_{\hat{k}_n,n}^Q\big(\tilde{\bm\alpha}_n,p_{\hat{k}_n,n}\big)-\mu_{\hat{k}_n} p_{\hat{k}_n,n},\tag{\ref{P:DFnJDFUARRHSel}}
\end{align}
\end{subequations}
which is convex due to~\cref{L:rknQConc}. In the special case when only one RRH quantizes and forwards the signal in SC $n$, i.e. when $\bm 1^\mathsf{T}\tilde{\bm\alpha}_n=1$, the optimal user power allocation $\tilde{p}_{\hat{k}_n,n}$ that solves problem~\eqref{P:DFnJDFUARRHSel} can be obtained in a closed-form as given by the following proposition.
\begin{proposition}\label{Prop:OptPASRRHQ}
Let 
$\tilde{m}_n\in\mathcal{M}$ be the single RRH that quantizes and forwards the signal in SC $n$, so that $\tilde{\alpha}_{\tilde{m}_n,n}=1$ and $\tilde{\alpha}_{m,n}=0\ \forall m\neq\tilde{m}_n$. Then, the optimal power allocation that solves problem~\eqref{P:DFnJDFUARRHSel} is given by
\begin{align}
\tilde{p}_{\hat{k}_n}&=\frac{\left(\theta_{\tilde{m}_n}+2\right)\sigma_{\tilde{m}_n}^2}{2|h_{\tilde{m}_n,\hat{k}_n,n}|^2}\Bigg(\bigg(1+\frac{4}{\left(\theta_{\tilde{m}_n}+2\right)^2}\notag\\
&\quad\cdot\bigg[\frac{\omega_{\hat{k}_n} B|h_{\tilde{m}_n,\hat{k}_n,n}|^2\theta_{\tilde{m}_n}}{\sigma_{\tilde{m}_n}^2\mu_{\hat{k}_n} N\ln 2}-\left(\theta_{\tilde{m}_n}+1\right)\bigg]^+\bigg)^{\frac{1}{2}}-1\Bigg),\label{E:OptPASingleRRH}
\end{align}
where $\theta_{\tilde{m}_n}\triangleq 2^{2\beta_{\tilde{m}_n}}/3>0$.
\end{proposition}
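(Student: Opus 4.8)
The plan is to reduce problem~\eqref{P:DFnJDFUARRHSel} with a single active RRH to a one-dimensional concave maximization and solve the resulting first-order condition in closed form. By \cref{L:rknQConc} the objective $\omega_{\hat{k}_n}r^Q_{\hat{k}_n,n}(\tilde{\bm\alpha}_n,p_{\hat{k}_n,n})-\mu_{\hat{k}_n}p_{\hat{k}_n,n}$ is concave on $p_{\hat{k}_n,n}\ge 0$ (it is a concave rate minus a linear term), so the maximizer is the projection onto $[0,\infty)$ of the unique stationary point of the unconstrained objective: it is $0$ when the objective is non-increasing at $p_{\hat{k}_n,n}=0$, and otherwise it is the positive root of the stationarity equation. (Throughout I take $\mu_{\hat{k}_n}>0$, which holds whenever user $\hat{k}_n$'s power budget is binding; if $\mu_{\hat{k}_n}=0$ the objective is strictly increasing and~\eqref{E:OptPASingleRRH} is read as $\tilde p_{\hat{k}_n}=\infty$.)

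First I would simplify the partial SNR. Writing $\theta\triangleq\theta_{\tilde{m}_n}=2^{2\beta_{\tilde{m}_n}}/3$ and abbreviating $h\triangleq h_{\tilde{m}_n,\hat{k}_n,n}$, $\sigma^2\triangleq\sigma_{\tilde{m}_n}^2$, so that $3\cdot 2^{-2\beta_{\tilde{m}_n}}=1/\theta$, expression~\eqref{E:PartSNR} becomes $\gamma^Q_{\tilde{m}_n,\hat{k}_n,n}(p)=\theta|h|^2 p/\bigl(|h|^2 p+(\theta+1)\sigma^2\bigr)$, hence $1+\gamma^Q_{\tilde{m}_n,\hat{k}_n,n}(p)=(\theta+1)\bigl(|h|^2 p+\sigma^2\bigr)/\bigl(|h|^2 p+(\theta+1)\sigma^2\bigr)$, a ratio of affine functions of $p$. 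By~\eqref{E:SCRJ} the objective of~\eqref{P:DFnJDFUARRHSel} is therefore $\omega_{\hat{k}_n}(B/N)\bigl[\log_2(|h|^2 p+\sigma^2)-\log_2(|h|^2 p+(\theta+1)\sigma^2)\bigr]-\mu_{\hat{k}_n}p$ up to an additive constant.

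Next I would impose stationarity. Differentiating, the two logarithmic terms give $|h|^2/(|h|^2 p+\sigma^2)-|h|^2/(|h|^2 p+(\theta+1)\sigma^2)=\theta\sigma^2|h|^2/\bigl[(|h|^2 p+\sigma^2)(|h|^2 p+(\theta+1)\sigma^2)\bigr]$, so the first-order condition reads $(|h|^2 p+\sigma^2)\bigl(|h|^2 p+(\theta+1)\sigma^2\bigr)=\omega_{\hat{k}_n}B|h|^2\theta\sigma^2/(N\mu_{\hat{k}_n}\ln 2)\triangleq C$. This is a quadratic in $p$: the sum of the two ``roots in $|h|^2p$'' is $(\theta+2)\sigma^2$ and the discriminant is $\theta^2\sigma^4+4C$, so $p=\bigl[-(\theta+2)\sigma^2\pm\sqrt{\theta^2\sigma^4+4C}\,\bigr]/(2|h|^2)$. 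The minus-sign root is always negative and infeasible, so the candidate stationary point is the plus-sign root; pulling $(\theta+2)\sigma^2$ out of the square root and using $\bigl[\theta^2\sigma^4+4C\bigr]/\bigl[(\theta+2)^2\sigma^4\bigr]=1+\tfrac{4}{(\theta+2)^2}\bigl[\tfrac{\omega_{\hat{k}_n}B|h|^2\theta}{\sigma^2\mu_{\hat{k}_n}N\ln 2}-(\theta+1)\bigr]$ yields exactly~\eqref{E:OptPASingleRRH}, up to the $[\cdot]^+$.

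The last step, and the only delicate point, is justifying the $[\cdot]^+$, i.e.\ the projection onto $p\ge 0$. Because the objective is concave, the constrained maximizer equals $0$ exactly when its derivative at $p=0$ is $\le 0$, which simplifies to $\omega_{\hat{k}_n}B|h|^2\theta/(N\mu_{\hat{k}_n}\sigma^2\ln 2)\le\theta+1$; I would verify that this is precisely the condition $C\le(\theta+1)\sigma^4$ under which the bracket $\bigl[\tfrac{\omega_{\hat{k}_n}B|h_{\tilde{m}_n,\hat{k}_n,n}|^2\theta_{\tilde{m}_n}}{\sigma_{\tilde{m}_n}^2\mu_{\hat{k}_n}N\ln 2}-(\theta_{\tilde{m}_n}+1)\bigr]^+$ vanishes, in which case~\eqref{E:OptPASingleRRH} collapses to $\tilde p_{\hat{k}_n}=\tfrac{(\theta+2)\sigma^2}{2|h|^2}(\sqrt{1}-1)=0$; conversely, when the bracket is positive the plus-sign root above is itself strictly positive and feasible and equals~\eqref{E:OptPASingleRRH}. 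The main obstacle is purely bookkeeping: matching the larger root of the quadratic to the particular normalized form in~\eqref{E:OptPASingleRRH} and confirming that its sign change and the activation of $[\cdot]^+$ occur at the same threshold; there is no conceptual difficulty beyond the concavity already supplied by \cref{L:rknQConc}.
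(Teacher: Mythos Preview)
Your proposal is correct and follows essentially the same route as the paper: invoke concavity from \cref{L:rknQConc}, set the derivative of the objective in~\eqref{P:DFnJDFUARRHSel} to zero, solve the resulting quadratic, pick the larger root, and then use the sign of the derivative at $p=0$ to justify the $[\cdot]^+$ truncation. The only cosmetic difference is that you introduce the substitution $\theta=2^{2\beta_{\tilde m_n}}/3$ up front and factor $1+\gamma^Q$ as a ratio of affine functions before differentiating, which cleans up the algebra; the paper instead writes the stationarity condition directly as the quadratic~\eqref{E:DiffPASRRHQ} and then extracts the right-hand root. One tiny slip: by Vieta the sum of the roots in $|h|^2p$ is $-(\theta+2)\sigma^2$, not $(\theta+2)\sigma^2$, but your subsequent quadratic-formula expression and normalization are correct regardless.
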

\begin{proof}
Please refer to appendix~\ref{App:ProofOptPASRRHQ}.
\end{proof}
\cref{Prop:OptPASRRHQ} shows that the optimal user power allocation in the case of FaD processing with a single RRH selection has a threshold structure and is non-zero only if $\tfrac{|h_{\tilde{m}_n,\hat{k}_n,n}|^2}{\sigma_{\tilde{m}_n}^2}>\tfrac{\mu_{\hat{k}_n} N\ln 2}{\omega_{\hat{k}_n}B}\big(1+\tfrac{1}{\theta_{\tilde{m}_n}}\big)$. 
On the other hand, if more than one RRHs quantize and forward the signal in SC $n$, i.e. if $\bm 1^\mathsf{T}\tilde{\bm\alpha}_n>1$, then 
the optimal user power allocation $\tilde{p}_{\hat{k}_n,n}$ for problem~\eqref{P:DFnJDFUARRHSel} can be efficiently obtained by a one-dimensional line search whose complexity does not depend on the number of RRHs $M$. Thus, for a fixed RRH selection $\tilde{\bm\alpha}_n$, $\tilde{p}_{\hat{k}_n,n}$ can be efficiently obtained either using~\eqref{E:OptPASingleRRH} when $\bm 1^\mathsf{T}\tilde{\bm\alpha}_n=1$, or via a one-dimensional line search, when $\bm 1^\mathsf{T}\tilde{\bm\alpha}_n>1$. 
The optimal solution to the joint RRH selection and user power allocation problem~\eqref{P:DFnJDFixUA} can thus be obtained by solving problem~\eqref{P:DFnJDFUARRHSel} 
for all the $2^M$ possible RRH selections $\tilde{\bm\alpha}_n\in\{0,1\}^M$, 
and then choosing the RRH selection with the corresponding optimal user power allocation that gives the largest objective value for~\eqref{P:DFnJDFixUA}. Since the optimal user power allocation can be obtained either in closed-form using~\eqref{E:OptPASingleRRH} when $\bm 1^\mathsf{T}\tilde{\bm\alpha}_n=1$, or using a one-dimensional line search whose complexity does not depend on $M$, when $\bm 1^\mathsf{T}\tilde{\bm\alpha}_n>1$, the overall complexity of optimally solving problem~\eqref{P:DFnJDFixUA} is $O\left(2^M\right)$. Finally, after solving problem~\eqref{P:DFnJDFixUA} for each user assignment $\hat{k}_n\in\mathcal{K}$, 
the optimal solution to problem~\eqref{P:DFnJD} for the FaD mode can be obtained by choosing the user assignment and corresponding RRH selection and power allocation that maximizes the objective in~\eqref{P:DFnJD}. Thus, 
the overall complexity of solving problem~\eqref{P:DFnJD} optimally is $O\left(K2^M\right)$. 
The next section describes how to solve problem~\eqref{P:DualFuncn} for the other DaF processing mode~$(\delta_n=1)$. 
\subsubsection{The Case of DaF Processing~$\left(\delta_n=1\right)$}\label{SS:DF}
In the DaF mode, $\delta_n=1$ and problem~\eqref{P:DualFuncn} reduces to 
\begin{subequations}
\label{P:DFnLD}
\begin{align}
\max_{\bm p_n\succeq\bm 0,\bm\alpha_n\in\{0,1\}^M}&\,L^D(\bm\nu_n,\bm\alpha_n,\bm p_n,\bm\lambda,\bm\mu)\tag{\ref{P:DFnLD}}\\
\mathrm{s.t.}
&\,\bm 1^\mathsf{T}\bm\alpha_n\leq 1,\label{C:RRHSelDFnLD}
\end{align}
\end{subequations}
where the objective function is given by
\begin{align}
&L^D(\bm\nu_n,\bm\alpha_n,\bm p_n,\bm\lambda,\bm\mu)\notag\\
&\triangleq\sum_{k\in\mathcal{K}}\omega_k\nu_{k,n}\sum_{m\in\mathcal{M}}\alpha_{m,n}r^D_{m,k,n}(p_{k,n})-\sum_{m\in\mathcal{M}}\frac{\lambda_m\alpha_{m,n}}{\bar{R}_m}\notag\\
&\quad\cdot\sum_{k\in\mathcal{K}}\nu_{k,n}r^D_{m,k,n}(p_{k,n})
-\sum_{k\in\mathcal{K}}\mu_k p_{k,n}.
\end{align}
Similar to the case of FaD processing in~\cref{SS:FD}, if no user is assigned to SC $n$, 
it is evident that the optimal power allocation for problem~\eqref{P:DFnLD} is $\bm p_n=\bm 0$, while it can be assumed without loss of generality that no RRH is selected for processing the signal, i.e. $\bm\alpha_n=\bm 0$, in this case. On the other hand, if SC $n$ is assigned to a user $\hat{k}_n\in\mathcal{K}$, then problem~\eqref{P:DFnLD} reduces to 
\begin{subequations}
\label{P:DFnLDFixUA}
\begin{align}
\max_{
p_{\hat{k}_n,n}\geq 0,
\bm\alpha_n\in\{0,1\}^M
}&\,\sum_{m\in\mathcal{M}}\alpha_{m,n}\Big(\omega_{\hat{k}_n}-\frac{\lambda_m}{\bar{R}_m}
\Big)r^D_{m,\hat{k}_n,n}\big(p_{\hat{k}_n,n}\big)\notag\\
&\,-\mu_{\hat{k}_n} p_{\hat{k}_n,n}\tag{\ref{P:DFnLDFixUA}}\\
\mathrm{s.t.}&\,\eqref{C:RRHSelDFnLD}.\notag
\end{align}
\end{subequations}
Problem~\eqref{P:DFnLDFixUA} is non-convex due to the integer constraints on the RRH selection $\bm\alpha_n$ and hence difficult to solve directly to find the optimal power allocation for user $\hat{k}_n$ and the optimal RRH that should locally decode this user's signal. Thus, similar to the procedure in~\cref{SS:FD}, we first assume that an RRH $\tilde{m}_n\in\mathcal{M}$ is chosen to decode user $\hat{k}_n$'s message in SC $n$ i.e., $\alpha_{\tilde{m}_n,n}=1$ and $\alpha_{m,n}=0\ \forall m\neq\tilde{m}_n$, which reduces problem~\eqref{P:DFnLDFixUA} to
\begin{subequations}
\label{P:DFnLDFUARRHSel}
\begin{align}
\max_{p_{\hat{k}_n,n}\geq 0}&\,\Big(\omega_{\hat{k}_n}-
\frac{\lambda_{\tilde{m}_n}}{\bar{R}_{\tilde{m}_n}}
\Big)r^D_{\tilde{m}_n,n}\big(p_{\hat{k}_n,n}\big)-\mu_{\hat{k}_n} p_{\hat{k}_n,n}\tag{\ref{P:DFnLDFUARRHSel}}.
\end{align}
\end{subequations}
The optimal solution to the above problem~\eqref{P:DFnLDFUARRHSel} is given by the following proposition. 
\begin{proposition}\label{Prop:OptPDFnLDFUARRHSel}
The optimal power allocation $\tilde{p}_{\hat{k}_n,n}$ that solves problem~\eqref{P:DFnLDFUARRHSel} is given by 
\begin{align}
\tilde{p}_{\hat{k}_n,n}=\Bigg[\frac{B}{\mu N\ln 2}\Big(\omega_{\hat{k}_n}-\frac{\lambda_{\tilde{m}_n}}{\bar{R}_{\tilde{m}_n}}\Big)-\frac{\sigma_{\tilde{m}_n}^2}{|h_{\tilde{m}_n,\hat{k}_n,n}|^2}\Bigg]^+.\label{E:OptPALD}
\end{align} 
\end{proposition}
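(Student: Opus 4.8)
The plan is to treat problem~\eqref{P:DFnLDFUARRHSel} as an elementary single-variable optimization, analogous to (but simpler than) the argument behind~\cref{Prop:OptPASRRHQ}, since the local-decoding rate in~\eqref{E:SCRL} is an ordinary Shannon-type logarithm with no quantization-noise term. Writing $p\triangleq p_{\hat{k}_n,n}$, $c\triangleq\omega_{\hat{k}_n}-\lambda_{\tilde{m}_n}/\bar{R}_{\tilde{m}_n}$, $a\triangleq|h_{\tilde{m}_n,\hat{k}_n,n}|^2/\sigma_{\tilde{m}_n}^2>0$, and $\mu\triangleq\mu_{\hat{k}_n}\geq 0$, the objective becomes $\phi(p)=c(B/N)\log_2(1+ap)-\mu p$, to be maximized over $p\geq 0$.

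First I would dispose of the case $c\leq 0$: since $\log_2(1+ap)$ is nonnegative and nondecreasing on $p\geq 0$ while $\mu\geq 0$, the map $\phi$ is nonincreasing there, so its maximizer is $p=0$; and when $c\leq 0$ the bracketed quantity in~\eqref{E:OptPALD} is nonpositive, so $[\cdot]^+=0$, which matches. For $c>0$, $\phi$ is the sum of a strictly concave logarithm and a linear term, hence strictly concave and differentiable on $p\geq 0$; setting $\phi'(p)=cBa/\big(N(1+ap)\ln 2\big)-\mu=0$ and solving for $p$ yields the unique stationary point
\begin{align*}
p^\star=\frac{B}{\mu N\ln 2}\Big(\omega_{\hat{k}_n}-\frac{\lambda_{\tilde{m}_n}}{\bar{R}_{\tilde{m}_n}}\Big)-\frac{\sigma_{\tilde{m}_n}^2}{|h_{\tilde{m}_n,\hat{k}_n,n}|^2}.
\end{align*}
If $p^\star\geq 0$ it is the global maximizer over the feasible set; if $p^\star<0$ then $\phi'(0)<0$ and, by concavity, $\phi$ strictly decreases on $[0,\infty)$, so the maximizer is again $p=0$. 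Hence in every case the optimal power allocation equals $[p^\star]^+$, which is precisely~\eqref{E:OptPALD}.

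There is no substantive technical obstacle here — it is a textbook concave one-dimensional program — so the work is essentially bookkeeping: keeping the case split on the sign of the effective weight $c=\omega_{\hat{k}_n}-\lambda_{\tilde{m}_n}/\bar{R}_{\tilde{m}_n}$ tidy, and noting the degenerate situation $\mu_{\hat{k}_n}=0$ with $c>0$, in which $\phi$ is unbounded above in $p$. The latter simply reflects the user's power constraint~\eqref{C:PCULHybMain} being inactive at the current dual point; it is resolved by the outer subgradient update on $\bm\mu$ (equivalently, one may set $\tilde{p}_{\hat{k}_n,n}=\bar{P}_{\hat{k}_n}$), so~\eqref{E:OptPALD} is to be understood for $\mu_{\hat{k}_n}>0$.
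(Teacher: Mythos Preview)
Your argument is correct and is exactly the standard water-filling derivation the paper invokes: the paper's own proof simply states that it ``is similar to that for the standard water-filling power allocation for parallel AWGN channels'' and gives no further details, so you have spelled out precisely what the authors leave implicit. Your additional remarks on the sign of $c=\omega_{\hat{k}_n}-\lambda_{\tilde{m}_n}/\bar{R}_{\tilde{m}_n}$ and the degenerate case $\mu_{\hat{k}_n}=0$ are accurate and go slightly beyond the paper's one-line justification.
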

\begin{proof}
The proof is similar to that for the standard water-filling power allocation for parallel AWGN channels~\cite{goldsmith2005wireless}.  
\end{proof}
The optimal power allocation in~\eqref{E:OptPALD} for a fixed user assignment and RRH selection has the same form as the 
classic water-filling solution~\cite{goldsmith2005wireless}. However, in general 
the water levels are different for different SCs, and are determined by the user $\hat{k}_n$ to which SC $n$ is assigned to, as well as the RRH $\tilde{m}_n$ that is chosen to decode user $\hat{k}_n$'s message, through the parameters $\omega_{\hat{k}_n}$, $\lambda_{\tilde{m}_n}$, $\bar{R}_{\tilde{m}_n}$ and $h_{\tilde{m}_n,\hat{k}_n,n}$. The water-level on SC $n$ can be negative if $\lambda_{\tilde{m}_n}/\bar{R}_{\tilde{m}_n}\geq \omega_{\hat{k}_n}$, which implies that no power should be allocated to user $\hat{k}_n$ on SC $n$ in this case.
The optimal solution to the joint RRH selection and user power allocation problem~\eqref{P:DFnLDFixUA} can thus be obtained by computing $\tilde{p}_{\hat{k}_n,n}$ as given in~\eqref{E:OptPALD} for all the RRHs $\tilde{m}_n\in\mathcal{M}$, and then selecting the RRH that gives the highest objective value for problem~\eqref{P:DFnLDFixUA}. This procedure incurs a complexity of $O\left(M\right)$ and is performed for each user assignment $\hat{k}_n\in\mathcal{K}$. Then the optimal solution to problem~\eqref{P:DFnLD} for the DaF mode can be found by choosing the user that gives the highest objective value in~\eqref{P:DFnLD}. 
The overall complexity of optimally solving problem~\eqref{P:DFnLD} on each SC $n$ for the DaF processing mode is thus $O(KM)$. 

Finally, after solving the problems~\cref{P:DFnJD,P:DFnLD} for the FaD and DaF processing modes to obtain the maximal objective values $\bar{L}^Q_n$ and $\bar{L}^D_n$, respectively, the optimal solution to problem~\eqref{P:DualFuncn} can be found by selecting the processing mode 
that maximizes the objective in~\eqref{E:Lagn}, according to~\eqref{E:OptModeDFn}. 
Thus, for given dual variables $\bm\lambda$ and $\bm\mu$, problem~\eqref{P:DualFuncn} can be solved for each SC $n\in\mathcal{N}$ optimally, incurring a worst-case complexity of $O(K(2^M+M))$. 
Next, we consider the dual problem corresponding to~\eqref{P:ULHybMain}, given by 
\begin{align}
\min_{\bm\lambda\succeq\bm 0,\bm\mu\succeq\bm 0}g(\bm\lambda,\bm\mu).\label{E:DualProb}
\end{align}
The above dual problem~\eqref{E:DualProb} is convex and can be solved efficiently by using e.g., the ellipsoid method~\cite{boyd2014ellipsoid} to find the optimal dual variables $\bm\lambda^\star$ and $\bm\mu^\star$. Then, the optimal solution to problem~\eqref{P:DualFuncn} on each SC $n\in\mathcal{N}$ is given by $(\delta^\star_n,\bm\nu^\star_n,\bm\alpha_n^\star,\bm p_n^\star)$, computed as explained above with the optimal dual variables $\bm\lambda^\star$ and $\bm\mu^\star$. The overall algorithm for solving the joint uplink resource allocation problem~\eqref{P:ULHybMain} is thus summarized in~\cref{A:Overall}. Since the complexity of the ellipsoid method to find the optimal dual variables depends only on the size of the initial ellipsoid and the maximum length of the sub-gradients over the intial ellipsoid~\cite{boyd2014ellipsoid}, 
it follows from the above discussion that an asymptotically optimal solution to problem~\eqref{P:ULHybMain} for large $N$ can be efficiently computed with an overall complexity of $O(NK(2^M+M))$ using the algorithm in~\cref{A:Overall}. 
\begin{table}[h]
\centering
\caption{Algorithm for problem~\eqref{P:ULHybMain}}\label{A:Overall}
\begin{framed}
\begin{algorithmic}[1]
\State Initialization: $\bm\lambda\succeq\bm 0$, $\bm\mu\succeq\bm 0$
\Repeat 
\For {each SC $n\in\mathcal{N}$}
\State With $\delta_n=0$, for each user $\hat{k}_n\in\mathcal{K}$ and each RRH selection $\tilde{\bm\alpha}_n\in\{0,1\}^M$, find optimal user power allocation by solving~\eqref{P:DFnJDFUARRHSel}\label{AL:JDOptRRHPA}
\State Choose RRH selection and corresponding user power allocation that maximizes objective in~\eqref{P:DFnJDFixUA} 
\State Choose user with optimal RRH selection and power allocation that gives maximum value $\bar{L}^Q_n$ of objective in~\eqref{P:DFnJD}
\State With $\delta_n=1$, for each user $\hat{k}_n\in\mathcal{K}$ and each RRH $\tilde{m}_n\in\mathcal{M}$, find optimal user power allocation using~\cref{E:OptPALD} 
\State Choose RRH and corresponding user power allocation that maximizes objective in~\eqref{P:DFnLDFixUA}
\State Choose user with optimal RRH and power allocation that gives maximum value $\bar{L}^D_n$ of objective in~\eqref{P:DFnLD}
\State Find optimal mode selection $\bar{\delta}_n$ using~\eqref{E:OptModeDFn}
\EndFor
\State Update dual variables $\bm\lambda$, $\bm\mu$ using the ellipsoid method
\Until ellipsoid algorithm converges to desired accuracy
\end{algorithmic}
\end{framed}
\end{table}
\subsection{Suboptimal Solution}\label{SS:SubOptSubP}
In~\cref{SS:OptSol}, it is observed that optimally solving the problem~\eqref{P:DFnJDFixUA} for the FaD mode on each SC $n$ for a fixed user assignment requires an exhaustive search over all possible RRH selections, incurring a complexity of $O(2^M)$, which may be unsuitable for dense networks with large values of $M$. In order to reduce this complexity, we propose an alternative way to solve problem~\eqref{P:DFnJDFixUA} suboptimally, via a greedy algorithm, in this subsection. 
The objective function of problem~\eqref{P:DFnJDFixUA} can be written in terms of the set of selected RRHs $\mathcal{A}_n$, as
\begin{align}
f\big(\mathcal{A}_n,p_{\hat{k}_n,n}\big)&\triangleq\omega_kr_{\hat{k}_n,n}^Q\big(\mathcal{A}_n,p_{\hat{k}_n,n}\big)-\frac{2B}{N}\sum_{m\in\mathcal{A}_n}\frac{\beta_m\lambda_m}{\bar{R}_m}\notag\\
&\quad-\mu_{\hat{k}_n}p_{\hat{k}_n,n}.\label{E:ObjFixUASet}
\end{align}
Let $p_{\hat{k}_n,n}(\mathcal{A}_n)$ denote the optimal power allocation for user $\hat{k}_n$ that solves problem~\eqref{P:DFnJDFUARRHSel} when the selected set of RRHs is $\mathcal{A}_n$, which can be obtained using~\eqref{E:OptPASingleRRH} when $|\mathcal{A}_n|=1$, or by a line-search otherwise. Then, we construct a suboptimal set $\check{\mathcal{A}}_n$ and find the corresponding user power allocation $\check{p}_{\hat{k}_n,n}\triangleq p_{\hat{k}_n,n}\big(\check{\mathcal{A}_n}\big)$ for problem~\eqref{P:DFnJDFixUA} using a greedy algorithm as follows. Let $\mathcal{A}_{n,i}$ denote the set of selected RRHs on SC $n$ at the start of an iteration $i$, and $f_i\triangleq f\big(\mathcal{A}_{n,i},p_{\hat{k}_n,n}(\mathcal{A}_{n,i})\big)$ denote the corresponding maximum objective value of problem~\eqref{P:DFnJDFixUA} as given by~\eqref{E:ObjFixUASet}. 
We assume that initially, $\mathcal{A}_{n,1}=\emptyset$, and 
$f_1=0$. 
At each iteration $i=1,\dotsc,M$, we first find an RRH $j_i\in\mathcal{M}\setminus\mathcal{A}_{n,i}$, which is not currently selected and when added to the current set of RRHs $\mathcal{A}_{n,i}$, maximizes the objective in~\eqref{E:ObjFixUASet} among all the currently un-selected RRHs, i.e., 
\begin{align}
j_i=\mathop{\arg\max}_{\ell\in\mathcal{M}\setminus\mathcal{A}_{n,i}}f\big(\mathcal{A}_{n,i}\cup\{\ell\},p_{\hat{k}_n,n}(\mathcal{A}_{n,i}\cup\{\ell\})\big).\label{E:MaxRRHji}
\end{align}
If RRH $j_i$ improves the current maximum objective value $f_{i-1}$, we add it to the current set of selected RRHs $\mathcal{A}_{n,i}$, i.e., if 
\begin{align}
f\big(\mathcal{A}_{n,i}\cup\{j_i\},p_{\hat{k}_n,n}(\mathcal{A}_{n,i}\cup\{j_i\})\big)>f_i
\label{E:NewObjGreater}
\end{align}
holds, the set of selected RRHs is updated as $\mathcal{A}_{n,i+1}=\mathcal{A}_{n,i}\cup\{j_i\}$, and the current maximum objective value is updated as 
\begin{align}
f_{i+1}=f\big(\mathcal{A}_{n,i}\cup\{j_i\},p_{\hat{k}_n,n}(\mathcal{A}_{n,i}\cup\{j_i\})\big).\label{E:UpdCObj}
\end{align} This procedure is continued until no RRH $j_i$ can be found which satisfies~\eqref{E:NewObjGreater}, or there is no more remaining RRH to be searched, i.e. $i=M$. If the algorithm stops at iteration $i$, the final suboptimal RRH selection and corresponding user power allocation are given by $\check{\mathcal{A}}_n=\mathcal{A}_{n,i}$ and $\check{p}_{\hat{k}_n,n}=p_{\hat{k}_n,n}(\mathcal{A}_{n,i})$. Notice that the greedy algorithm would recover the optimal solution to problem~\eqref{P:DFnJDFixUA} whenever the optimal RRH selection consists of at most two RRHs. However, in general, for given dual variables $\bm\lambda$ and $\bm\mu$, the greedy algorithm gives only a suboptimal solution $\check{\bm\alpha}_n,\check{\bm p}_n$ to problem~\eqref{P:DFnJDFixUA}. An outline of the algorithm is given in~\cref{A:GreedyRRHSel}. 
\begin{table}[h]
\centering
\caption{Greedy algorithm for problem~\eqref{P:DFnJDFixUA}}\label{A:GreedyRRHSel}
\begin{framed}
\begin{algorithmic}[1]
\State Initialization: Iteration $i=1$, set of selected RRHs $\mathcal{A}_{n,1}=\emptyset$, maximum objective value $f_1=0$
\For {each $i=1,\dotsc,M$}
\State Find the RRH $j_{i}\in\mathcal{M}\setminus\mathcal{A}_{n,i}$, according to~\eqref{E:MaxRRHji}
\If {$j_i$ satisfies condition~\eqref{E:NewObjGreater}}
\State Update selected set as $\mathcal{A}_{n,i+1}=\mathcal{A}_{n,i}\cup\left\{j_{i}\right\}$
\State Update current maximum objective value $f_{i+1}$ according to~\cref{E:UpdCObj} 
\Else 
\State Stop and return RRH set $\check{\mathcal{A}}_n=\mathcal{A}_{n,i}$ and corresponding user power allocation $\check{p}_{\hat{k}_n,n}=p_{\hat{k}_n,n}(\mathcal{A}_{n,i})$
\EndIf
\EndFor
\end{algorithmic}
\end{framed}
\end{table}

From~\eqref{E:MaxRRHji}, it can be seen that in each iteration $i=1,\dotsc,M$, the greedy algorithm searches over the set of RRHs $\mathcal{M}\setminus\mathcal{A}_{n,i}$, which has size $M-(i-1)$ to find the RRH $j_i$ according to~\eqref{E:MaxRRHji}. Since there can be at most $M$ iterations, the worst-case complexity of the greedy algorithm is $\sum_{i=1}^{M}M-i+1=M(M+1)/2$, which 
is $O(M^2)$. Since the complexity of finding the user power allocation in each iteration does not depend on $M$, the overall complexity of using the greedy algorithm to solve problem~\eqref{P:DFnJDFixUA} on each SC $n$ is $O(M^2)$, which is only quadratic in $M$, compared to the exponential complexity of the exhaustive search over all possible RRH selections. Thus, in the algorithm in~\cref{A:Overall}, if problem~\eqref{P:DFnJDFixUA} is solved using the suboptimal greedy algorithm in~\cref{A:GreedyRRHSel} instead of the optimal exhaustive search, the worst-case complexity of the algorithm in~\cref{A:Overall} is reduced to $O(NK(M^2+M))$. However, using the greedy algorithm, a convergence to the optimal dual variables $\bm\lambda^\star$ and $\bm\mu^\star$ cannot be guaranteed and the primal solution obtained for problem~\eqref{P:ULHybMain} by this method need not always be feasible. In this case, the power allocations can be made feasible by scaling each of the power constraints in~\eqref{C:PCULHybMain}. Similarly, the constraint in~\eqref{E:IFHRC} can be made feasible by considering each SC in increasing order of the rates achieved. If the DaF mode was selected on an SC, we make the user power allocation zero on this SC. Otherwise, if the FaD mode was selected with more than one RRH processing the signal in the SC, the RRHs are de-selected in increasing order of their contribution to the SNR on each SC, until~\eqref{E:IFHRC} is satisfied. Thus, a feasible solution to problem~\eqref{P:ULHybMain} can always be obtained. However, in~\cref{Sec:SimResults}, it is shown through extensive simulations that there is only a negligible difference between the performance of the greedy algorithm 
and the exhaustive search for $\bm\alpha_n$'s in several practical scenarios. 
Thus, in practice, the joint resource allocation problem~\eqref{P:ULHybMain} may be solved close to optimally at a worst-case complexity of $O(NK(M^2+M))$ by using the greedy algorithm, which is a large reduction compared to a complexity of $O(NK(2^M+M))$ for the optimal algorithm. 
\section{Simulation Results}\label{Sec:SimResults}
For the simulation setup, we first consider a UD-CRAN cluster with $M=5$ RRHs and $K=3$ users. One RRH is located in the center of a square region with side 375 meters~(m), while the others are located on the vertices. The users are randomly located within a larger square region of side 750~m, and whose center coincides with that of the RRH square region. The fronthaul links of all the RRHs are assumed to have the same capacity $\bar{R}_m=\bar{R}~\forall m\in\mathcal{M}$, and all the RRHs use the same resolution of $\beta_m=\beta,~\forall m\in\mathcal{M}$ bits for the uniform SQ in the FaD processing mode. The wireless channel is centered at a frequency of $2$~GHz with a bandwidth $B=20$~MHz, following the Third Generation Partnership Project~(3GPP) Long Term Evolution-Advanced~(LTE-A) standard~\cite{3gpp36211}, and is divided into $N=64$ SCs using OFDMA. The combined path loss and shadowing is modeled as $38+30\log_{10}(d_{m,k})+X$ in dB~\cite{3gpp36931}, where $d_{m,k}$ is the distance between RRH $m$ and user $k$ in meters, and $X$ is the shadowing random variable, which is normally distributed with a standard deviation of $6$~dB. 
The multi-path on each wireless 
channel is modeled using an exponential power delay profile with $N/4$ taps 
and the small-scale fading on each tap is assumed to follow the Rayleigh distribution. The AWGN is assumed to have a power spectral density of $-174$~dBm/Hz with an additional noise figure of $6$~dB at each RRH, while the total transmit power at each user is $\bar{P}_k=23~\text{dBm}\,\forall k\in\mathcal{K}$, 
unless mentioned otherwise. The proposed algorithm for hybrid decoding is compared to the following two benchmark schemes:
\begin{itemize}
\item \textbf{FaD processing on all SCs}: In this case, 
we solve problem~\eqref{P:ULHybMain} with $\delta_n=0\ \forall n\in\mathcal{N}$ using a similar algorithm as given in~\cref{A:Overall}.
\item \textbf{DaF processing on all SCs}: In this case, 
we solve problem~\eqref{P:ULHybMain} with $\delta_n=1\ \forall n\in\mathcal{N}$. 
\end{itemize}
For simplicity, we consider maximization of the sum rate in problem~\eqref{P:ULHybMain}, i.e., the user rate weights $\omega_k=1~\forall k\in\mathcal{K}$, and the values averaged over many random user locations and channels are plotted against various system parameters. 

\begin{figure}[h]
\centering
\includegraphics[width=\linewidth]{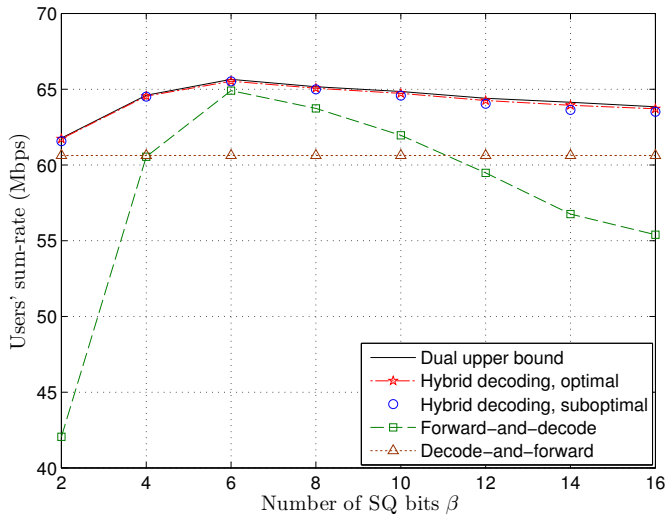}
\caption{Users' sum-rate vs.\ number of SQ bits for system with $M=5$, $K=3$, $B=20$~MHz, $N=64$ and $\bar{R}=250$~Mbps.}\label{F:SRvQ}
\end{figure}
\cref{F:SRvQ} plots the sum-rate against the number of bits $\beta$ used for the SQ at each RRH, when the common fronthaul capacity of the RRHs is $\bar{R}=250$~Mbps. From~\cref{F:SRvQ}, it is observed that the hybrid decoding with the proposed optimal and suboptimal algorithms outperforms both the benchmark schemes described above. Moreover, there is a negligible difference between the optimal and suboptimal hybrid decoding schemes, which implies that in practical UD-CRAN systems, the hybrid decoding gains can be achieved with the suboptimal greedy algorithm at a much lower complexity compared to the optimal algorithm. At lower values of $\beta$, the FaD scheme performs worse than the DaF scheme, since in this case the SQ is coarse, which reduces the achievable rate on each SC as given by~\eqref{E:SCRJ}, even though the fronthaul is able to support this rate. On the other hand, when $\beta$ is very high, the higher achievable rates given by~\eqref{E:SCRJ} cannot be supported by the limited fronthaul capacity of the RRHs, which again restricts the performance of the FaD scheme. However, for values of $\beta$ in between these two extremes, the FaD scheme outperforms the DaF scheme, while the proposed hybrid decoding offers the maximum advantage for all values of $\beta$. 
\begin{figure}[h]
\centering
\includegraphics[width=\linewidth]{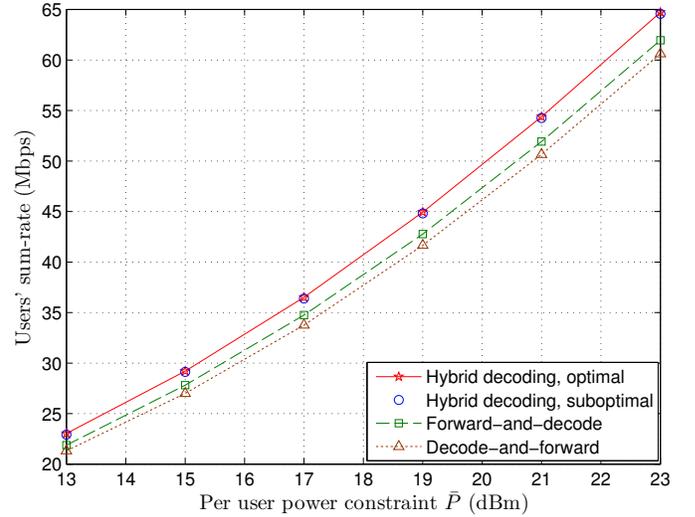}
\caption{Users' sum-rate vs.\ per user total transmit power constraint for system with $M=5$, $K=3$, $B=20$~MHz, $N=64$, $\bar{R}=250$~Mbps and $\beta=10$.}\label{F:SRvP}
\end{figure}
\cref{F:SRvP} plots the sum-rate against the maximum transmit power constraint $\bar{P}$ at each user, when the common fronthaul capacity of the RRHs is $\bar{R}=250$~Mbps, and $\beta=10$. Again, from~\cref{F:SRvP}, it is observed that the proposed hybrid decoding outperforms both the benchmark schemes above, while the performance of the proposed optimal and suboptimal algorithms are similar. In addition to the proposed algorithms and the benchmark schemes listed above,~\cref{F:SRvQ,F:SRvP} also plot the dual upper bound to problem~\eqref{P:ULHybMain} given by the optimal value of the dual function $g(\bm\lambda^\star,\bm\mu^\star)$. It can be observed that the difference between the proposed optimal and suboptimal solutions and the dual upper bound is negligible even for $N=64$. Since the duality gap diminishes with $N$, this implies that the proposed algorithms are nearly optimal for practical values of $N$. 

Next, we consider a large network with $M=125$ RRHs whose locations are fixed as shown in~\cref{F:MU125M120KL}, and $K=120$ users randomly located within a square region of side $2$~km. Since the users that are far away from an RRH do not contribute to its achievable rate, to reduce the complexity of our proposed algorithms, we assume that the network is divided into $25$ square clusters of $5$ RRHs each, as shown in~\cref{F:MU125M120KL}. We assume that proper frequency reuse has been assigned over adjacent clusters and thus neglect the inter-cluster interference for simplicity. Then, the algorithms in~\cref{A:Overall,A:GreedyRRHSel} can be executed in parallel in each cluster. \cref{F:SRvFHR} plots the sum-rate of the users, averaged over random channel realizations, against the common fronthaul capacity of the RRHs, with 
$\beta=10$ bits. 
When the fronthaul capacity is low, the DaF scheme performs better than the FaD scheme, and its performance matches that of the proposed hybrid decoding, since in this case the fronthaul cannot support the transmission of accurately quantized signals from multiple RRHs in any SC. On the other hand, when the fronthaul capacity is sufficiently large, FaD outperforms DaF, and its  sum-rate 
approaches that achieved by hybrid decoding. In this case, most or all the RRHs can participate in the joint decoding on each SC, which provides a joint signal processing gain. In between these two extremes, the proposed hybrid decoding offers significant gains over both the DaF and FaD schemes. In this regime, by performing DaF processing on some of the SCs, the hybrid decoding saves the fronthaul capacities of the respective RRHs, which can in turn be used for carrying quantized signals for FaD processing on other SCs, thus performing better than both the benchmarks. Thus, under practical setups with finite fronthaul capacities, even the suboptimal hybrid decoding algorithm with a low complexity of $O(NK(M^2+M))$ can offer significant throughput gains over the conventional CRAN with FaD processing at all SCs, as well as a cellular network with DaF processing at all SCs. 
\begin{figure}[h]
\centering
\includegraphics[width=\linewidth]{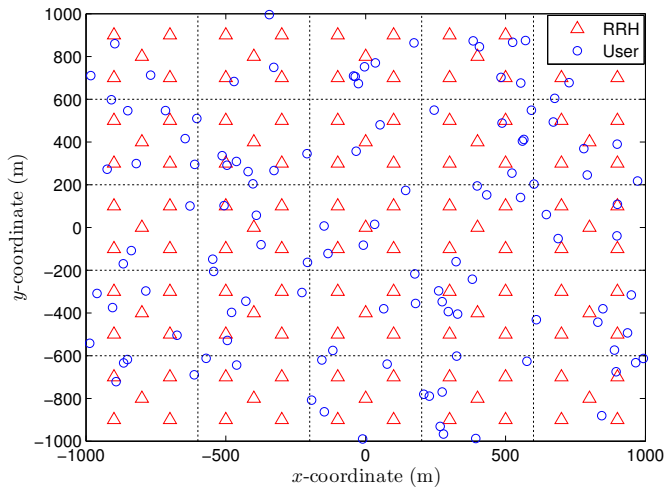}
\caption{Example UD-CRAN layout with $M=125$ and $K=120$.}\label{F:MU125M120KL}
\end{figure}
\begin{figure}[h]
\centering
\includegraphics[width=\linewidth]{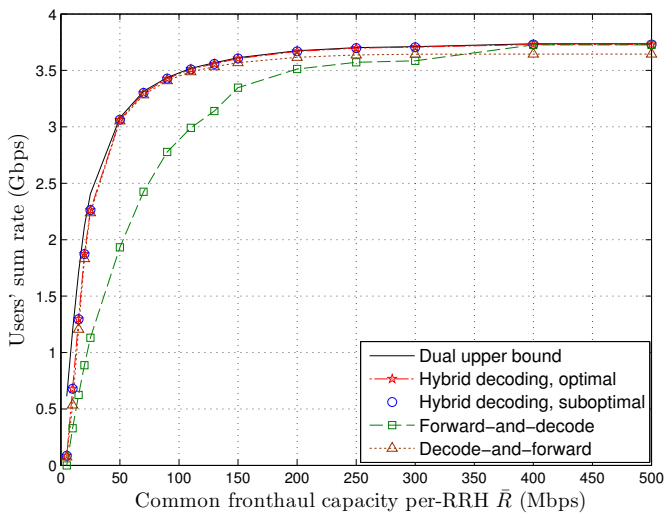}
\caption{Users' sum-rate vs.\ per-RRH fronthaul capacity for system with $M=125$, $K=120$, $B=20$~MHz, $N=64$ and $\beta=10$.}\label{F:SRvFHR}
\end{figure}

\section{Conclusion}\label{Sec:Conc}
In this paper, we have studied the uplink transmission in an OFDMA-based UD-CRAN with hybrid decoding at the RRHs. We formulated a joint RRHs' processing mode selection, user-SC assignment, and users' power allocation problem to maximize the weighted-sum-rate of the users over all SCs subject to the given RRHs' individual fronthaul capacity constraints and the individual transmit power constraints at the users. Although the problem is non-convex, we propose two efficient solutions based on the Lagrange duality technique. Through numerical simulations, it is shown that the proposed algorithms for hybrid decoding with optimized resource allocation outperform both state-of-the-art CRAN with FaD processing and conventional cellular network with DaF processing. 
\appendices
\section{Proof of~\cref{L:rknQConc}}\label{App:ProofLrknQConc}
In this proof, we drop the user and SC sub-scripts $k$ and $n$ for convenience. 
Then, by direct differentiation, it can be shown that the second-order derivative of $\gamma^Q_m(p)$ in~\eqref{E:PartSNR} with respect to~(w.r.t.) $p$ satisfies $\od[2]{\gamma^Q_m(p)}{p}\leq 0,\,\forall p\geq 0$, which implies that $\gamma^Q_m(p)$ in~\eqref{E:PartSNR} is 
concave for $p\geq 0$. 

For a given $\bm\alpha$, since $\gamma^Q(\bm\alpha,p)$ in~\eqref{E:RxSNRJD} is the non-negative sum of the concave functions $\gamma^Q_m(p),\,m\in\mathcal{A}_n$, it follows that $\gamma^Q(\bm\alpha,p)$ is also concave in $p$~\cite{boyd2004convex}. Now, the logarithm function is concave and its extended value extension on the real line is non-decreasing. Thus, for given $\bm\alpha$, $r^Q(\bm\alpha,p)=(B/N)\log_2(1+\gamma^Q(\bm\alpha,p))$ according to~\eqref{E:SCRJ} is the composition of the concave function $\gamma^Q(\bm\alpha,p)$ with a concave and non-decreasing function, and hence $r^Q\left(\bm\alpha,p\right)$ is also concave for $p\geq 0$~\cite{boyd2004convex}, which completes the proof. 
\section{Proof of~\cref{Prop:OptPASRRHQ}}\label{App:ProofOptPASRRHQ}
In this proof, we drop the user and SC sub-scripts 
for convenience. 
Since the objective of problem~\eqref{P:DFnJDFUARRHSel} is concave in $p$, the user power allocation that maximizes its value can be found by differentiating 
the objective w.r.t. $p$ and setting the derivative equal to zero; this leads to the equation, 
\begin{align}
&\frac{-\mu\theta_{\tilde{m}_n}|h_{\tilde{m}_n}|^2}{\sigma_{\tilde{m}_n}^2}p^2-\frac{\mu(2\theta_{\tilde{m}_n}+1)|h_{\tilde{m}_n}|^2}{\sigma_{\tilde{m}_n}^2}p
-\mu(\theta_{\tilde{m}_n}+1)\notag\\
&+\frac{\omega B|h_{\tilde{m}_n}|^2}{\sigma_{\tilde{m}_n}^2N\ln 2}=0.\label{E:DiffPASRRHQ}
\end{align}
Further, it can be observed that the user power allocation $\tilde{p}$ that maximizes the objective of problem~\eqref{P:DFnJDFUARRHSel} is given by the right-hand root of the quadratic equation~\eqref{E:DiffPASRRHQ}, which can be expressed as 
\begin{align}
\tilde{p}&=\frac{(\theta_{\tilde{m}_n}+2)\sigma_{\tilde{m}_n}^2}{2|h_{\tilde{m}_n}|^2}\Bigg(\bigg(1+\frac{4}{\left(\theta_{\tilde{m}_n}+2\right)^2}\notag\\
&\quad\cdot\Big(\frac{\omega B|h_{\tilde{m}_n}|^2\theta_{\tilde{m}_n}}{\sigma_{\tilde{m}_n}^2\mu N\ln 2}-(\theta_{\tilde{m}_n}+1)\Big)\bigg)^{\frac{1}{2}}-1\Bigg).\label{E:OptPAJDProof}
\end{align}
Also, since we require $\tilde{p}\geq 0$, from~\eqref{E:OptPAJDProof} we have 
\begin{align}
\frac{\omega B|h_{\tilde{m}_n}|^2\theta_{\tilde{m}_n}}{\sigma_{\tilde{m}_n}^2\mu N\ln 2}-(\theta_{\tilde{m}_n}+1)\geq 0.\label{E:CondOptPAJDProof}
\end{align}
Combining~\eqref{E:CondOptPAJDProof} with~\eqref{E:OptPAJDProof} gives the expression in~\eqref{E:OptPASingleRRH}, which completes the proof. 
\bibliographystyle{IEEEtran_mod}
\bibliography{IEEEabrv,bibJournalList,ThesisBibliography}
\end{document}